\patchcmd{\thebibliography}{\section*}{\section}{}{}
\newtheorem{theorem}{Theorem}
\newtheorem{lemma}{Lemma}
\newtheorem{deff}{Definition}
\newtheorem{remark}{Remark}
\newtheorem{exm}{Example}
\begin{document}
\vspace{10mm}
\begin{center}
	\large{\textbf{An etude on a renormalization}}
\end{center}
\vspace{2mm}
\begin{center}
	\large{\textbf{Aleksandr V. Ivanov}}
\end{center}
\begin{center}
St. Petersburg Department of Steklov Mathematical Institute of Russian Academy of Sciences,\\ 
27 Fontanka, St. Petersburg 191023, Russia
\end{center}

\begin{center}
	E-mail: regul1@mail.ru
\end{center}
\begin{flushright}
On the 85th anniversary of PDMI RAS
\end{flushright}
\begin{flushright}
\adjincludegraphics[width = 6.3  cm, valign=c]{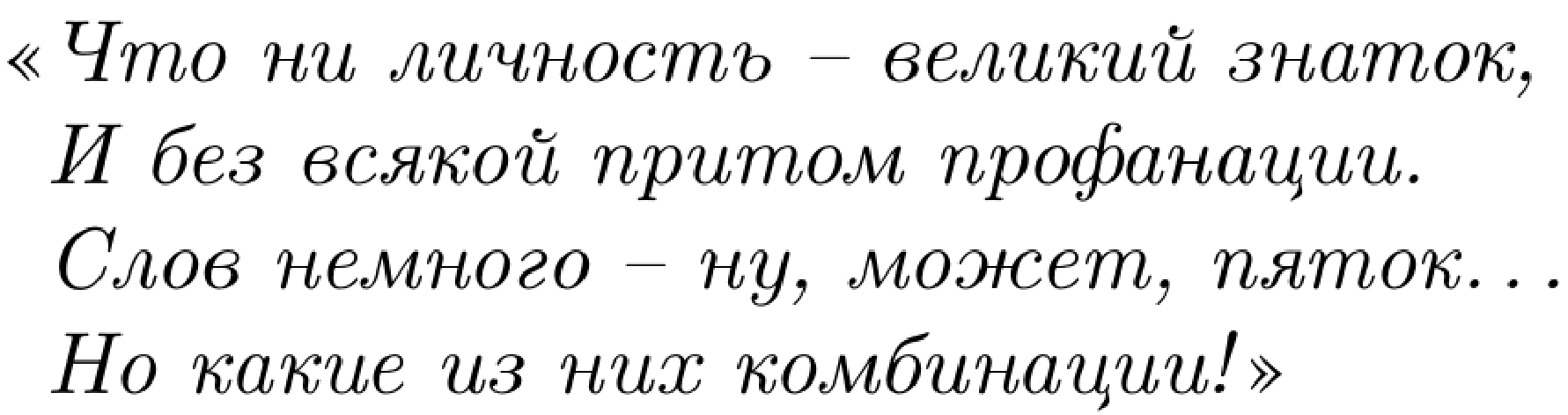}\\
\vspace{0.5mm}
A.A.Ivanov (1936--1996)
\end{flushright}

\vspace{10mm}

\textbf{Abstract.} In this paper, we study renormalization, that is, the procedure for eliminating singularities, for a special model using both combinatorial techniques in the framework of working with formal series, and using a limit transition in a standard multidimensional integral, taking into account the removal of the singular components. Special attention is paid to the comparative analysis of the two views on the problem. It is remarkably that the divergences, which have the same form in one approach, acquire a different nature in another approach and lead to interesting consequences. A special deformation of the spectrum is used as regularization.

\vspace{2mm}
\textbf{Keywords and phrases:} renormalization, regularization, spectrum deformation, Gaussian integral, diagrammatic technique, formal series, multidimensional integral.

\tableofcontents

\newpage	
	
\section{Introduction}
\label{34:sec:int}
Regularization and renormalization play an important role in modern theoretical physics and are used in a large number of quantum field models \cite{9,10,3,6,7,105}. However, in mathematical physics and pure mathematics, such concepts are rare or have a slightly different meaning, see \cite{Gelfand-1964,Vladimirov-2002}. First of all, this is due to the fact that the mathematical formalism devoted to a functional integration, see \cite{34-3,4,34-6}, is currently insufficiently developed. It is for this reason that the basic quantum field models have to be studied using the perturbative method, for which it is quite sufficient to understand the functional integral in a very stripped-down form.

From the point of view of the perturbative approach, which is associated with the analysis of formal (less often asymptotic) series, the concepts of regularization and renormalization are led by the presence of non-integrable densities and singular functionals in the coefficients of the series under study. This approach in this article will be called "physical". However, within the framework of another approach, which will be called "mathematical", in those models in which a strict definition of the functional integral can be used, formal series no longer appear, and consequently, the reasons for the introduction of the regularization and the application of the renormalization become different. Note that the words "physical" and "mathematical" are used only to distinguish, both approaches have a neat mathematical form.

In this paper, we analyze a special type of model that allows for both the physical description and a formulation using the functional integral in the sense of limit transition in a finite-dimensional case, see \cite{daniell-1919,Iv-19,zinn-2005}. The latter theme is quite popular and is used for various models \cite{34-c-m,I-R,34-11,34-12,34-13}. This formulation of the problem allows us to trace all the subtleties of both approaches, make a comparative analysis, and identify the positive and negative sides. Remarkably, the divergences that arise when using the perturbative approach due to the appearance of non-integrable densities take on a completely different meaning in the second approach: they manifest themselves either by the presence of a rapidly oscillating phase, or by the zeros of the entire functional. At the same time, the regularization and the renormalization are performed in the same way in both approaches.

The purpose of this paper is to demonstrate the application of mathematical formalism to a process that often occurs in theoretical physics, as well as an attempt to draw the attention of researchers working in related mathematical fields to an existing interesting problem. The mathematical analysis of the model presented in the article can be considered a brief outline, demonstrating in what form the renormalization procedure would like to generalize to a wider class of models.

The work has the following structure. Section \ref{34:sec:m} studies the case using a special type of functional integral. Necessary definitions and proofs are provided. Concepts such as the regularization, the renormalization, and the limit transition are discussed. Section \ref{34:sec:p} examines an approach based on the use of formal series. Concepts such as a diagrammatic technique, as well as the regularization and the renormalization are discussed. The section also contains a comparison of approaches. Section \ref{34:sec:zac} contains some concluding remarks.

\section{Mathematical approach}
\label{34:sec:m}

\begin{deff}\label{34-d-6}
Let $V$ denote a set of sequences $a=\{a_j\}_{j=1}^{+\infty}$, the elements $a_j$ of which belong to $\mathbb{R}$. Let also $n\in\mathbb{N}$. The symbol $p_n$ denotes a projector from $V$ to $V$ acting on an arbitrary $a\in V$ according to the rule:
\begin{equation}\label{34-21}
\big(p_n(a)\big)_j=a_j\,\,\,\mbox{for}\,\,\,j\in\{1,\ldots,n\},\,\,\,
\big(p_n(a)\big)_j=0\,\,\,\mbox{for}\,\,\,j>n.
\end{equation}
\end{deff}
\begin{deff}\label{34-d-7}
Let $k\in\mathbb{N}$. With the symbol $\mathcal{B}_k$, we notate a subset of $V$, all elements $\beta\in\mathcal{B}_k$ of which satisfy two conditions:
\begin{enumerate}
	\item $\beta_j>0$ for all $j\in\mathbb{N}$;
	\item $b_k(\beta)=\sum_{j=1}^{+\infty}\beta_j^{-k}<+\infty$.
\end{enumerate}
In addition, we use the symbol $\mu(\beta)$ to notate the minimum element, that is $\min_{j>0}(\beta_j)$.
\end{deff}
\begin{remark}\label{34-r-11}
If $\beta\in\mathcal{B}_k$ for some $k\in\mathbb{N}$, then $\beta\in\mathcal{B}_{j}$ for all $k\leqslant j\in\mathbb{N}$. Also note that, if necessary, the elements of the sequence $\beta$ can be ordered in ascending order, since the only point of accumulation is infinity.
\end{remark}
\begin{deff}\label{34-d-2}
Let $n,k\in\mathbb{N}$, $s\in\mathbb{R}$, and also $\beta\in\mathcal{B}_k$. Let us introduce the function
\begin{equation}\label{34-1}
\Phi_n(s,\beta)=
\prod_{j=1}^{n}\int_{\mathbb{R}}\frac{\mathrm{d}a_j}{\sqrt{\pi/\beta_j}}
\,e^{-\beta_j^{\phantom{1}}a_j^2+isa_j^2}.
\end{equation}
\end{deff}
\begin{remark}\label{34-r-1}
Note that the integral representation \eqref{34-1} is set correctly for all $s\in\mathbb{C}$, such that $\mathrm{Im}(s)>-\mu(\beta)$. In this paper, the assumption $\mathrm{Im}(s)=0$  is not a significant limitation, since it can be achieved by changing the normalization of the integral and redefining the elements $\beta_j\to\tilde{\beta}_j=\beta_j-\mathrm{Im}(s)$.
\end{remark}
\begin{exm}\label{34-e-7}
In mathematical physics, the set $\beta$ is usually called a spectrum and is directly related to differential operators. Consider the following example. Let $\mathbb{N}\ni n>1$ and $\mathbf{B}=\{x\in\mathbb{R}^n:|x|\leqslant1\}$ is the $n$-dimensional ball. In the domain $\mathbf{B}$, we can consider a spectral problem with zero Dirichlet boundary condition
\begin{equation*}
-\sum_{i=1}^n\partial_{x_i}^2u(x)=\lambda u(x),\,\,\,
u(x)\big|_{|x|=1}=0.
\end{equation*}
It is known, see paragraph 4 of chapter 5 in \cite{34-14}, that this problem has a countable spectrum $\beta=\{\beta_j\}_{j=1}^{+\infty}$ and the corresponding set of eigenfunctions $\{\phi_j(x)\in L^2(\mathbf{B})\}_{j=1}^{+\infty}$. In this case, the set $\beta$ has a single accumulation point at infinity, and all its elements are strictly positive. The Green's function for the mentioned operator in this case has the form
\begin{equation*}
G(x,y)=\sum_{j=1}^{+\infty}\phi_j(x)\beta^{-1}_j\phi_j(y),
\end{equation*}
where $x,y\in\mathbf{B}$ and $x\neq y$. Let us pay attention to the behavior of the Green's function near the diagonal $x\sim y$, see \cite{29,30-1-1 }, which in the main order, when decomposed in $x-y$, is proportional to the function
\begin{equation*}
\ln(|x-y|)\,\,\,\mbox{for}\,\,\,n=2\,\,\,\mbox{and}\,\,\,
|x-y|^{2-n}\,\,\,\mbox{for}\,\,\,n>2.
\end{equation*}
It is clear that in this case $b_1(\beta)=+\infty$ for all $n>1$, since the value of $G(x,y)$ diverges at $y\to x$ inside the domain $\mathbf{B}$. The convergence of the remaining numbers $b_k(\beta)$, where $k>1$, is conveniently analyzed using the representation
\begin{equation*}
b_k(\beta)=\int_{\mathbf{B}}\mathrm{d}^nx_1\ldots\int_{\mathbf{B}}\mathrm{d}^nx_k\,
G(x_1,x_2)\cdot\ldots\cdot G(x_{k-1},x_k).
\end{equation*}
In particular, if $n=2,3,$ then in this case $\beta\in\mathcal{B}_2$ occurs due to the convergence of the last integrals.
\end{exm}
\begin{lemma}\label{34-l-3}
Let the assumptions of Definition \ref{34-d-2} be true. Then for function \eqref{34-1} the following representation is valid
\begin{equation}\label{34-3}
	\Phi_n(s,\beta)=
	\Bigg(\prod_{j=1}^{+\infty}\int_{\mathbb{R}}\frac{\mathrm{d}a_j}{\sqrt{\pi/\beta_j}}\Bigg)
	e^{-S_0[a,\beta]+isS_1[p_n(a)]},
\end{equation}
where auxiliary objects are defined by the equalities
\begin{equation}\label{34-2}
	S_0[a,\beta]=\sum_{j=1}^{+\infty}\beta_j^{\phantom{1}}a_j^2,\,\,\,
	S_1[a]=\sum_{j=1}^{+\infty}a_j^2.
\end{equation}
\end{lemma}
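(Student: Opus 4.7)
The plan is to reduce the asserted identity to the trivial normalization of a one-dimensional Gaussian. The key observation is that the integrand on the right-hand side of \eqref{34-3} splits as an infinite product of single-variable factors, and for $j>n$ the factor is a pure Gaussian with unit normalization.

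First, I would make sense of the symbol on the right of \eqref{34-3} as a limit of partial products. Using the definitions \eqref{34-2} and the action of $p_n$ from \eqref{34-21}, the exponent factorizes
\begin{equation*}
-S_0[a,\beta]+isS_1[p_n(a)]=\sum_{j=1}^{n}\bigl(-\beta_j a_j^2+is a_j^2\bigr)+\sum_{j=n+1}^{+\infty}\bigl(-\beta_j a_j^2\bigr),
\end{equation*}
so formally the integrand factorizes as a product indexed by $j\in\mathbb{N}$, and each integral in the product involves only one variable $a_j$. Since the $a_j$ are decoupled, I would interpret the right-hand side as $\lim_{N\to\infty}\prod_{j=1}^{N}I_j$, where $I_j$ is the corresponding single-variable integral; this is legitimate because the factors for large $j$ turn out to be $1$ exactly.

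Next I would evaluate the two families of one-dimensional integrals. For $j\leqslant n$, the factor is precisely $\int_{\mathbb{R}}\tfrac{\mathrm{d}a_j}{\sqrt{\pi/\beta_j}}\,e^{-\beta_j a_j^2+is a_j^2}$, i.e.\ the $j$-th factor of \eqref{34-1}. For $j>n$, only the real Gaussian $e^{-\beta_j a_j^2}$ remains, and since $\beta_j>0$ by condition~1 of Definition~\ref{34-d-7}, the standard computation gives $\int_{\mathbb{R}}\tfrac{\mathrm{d}a_j}{\sqrt{\pi/\beta_j}}\,e^{-\beta_j a_j^2}=1$. Therefore the partial product $\prod_{j=1}^{N}I_j$ stabilizes at $\Phi_n(s,\beta)$ for every $N\geqslant n$, which is exactly \eqref{34-3}.

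The only subtlety, and thus the one point I would emphasize, is the interpretation of the infinite-dimensional objects $S_0[a,\beta]$ and of the infinite product $\prod_{j=1}^{+\infty}\int_{\mathbb{R}}\cdots$; these must be read in the sense just described (factorwise, with the product taken as the limit of partial products), and the summability assumption $\beta\in\mathcal{B}_k$ plays no role in this particular lemma beyond guaranteeing $\beta_j>0$. No Fubini-type argument on an infinite-dimensional measure is needed, because the integrand is a genuine product and the tail factors are equal to $1$ identically, so the lemma is essentially a bookkeeping statement preparing the notation used later.
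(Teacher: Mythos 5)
Your proof is correct and follows essentially the same route as the paper: factor the exponent so that the first $n$ integrals reproduce \eqref{34-1} while each tail factor is a normalized Gaussian equal to $1$. Your extra remark on reading the infinite product as a limit of partial products only makes explicit what the paper leaves implicit.
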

\begin{proof} Note that the object from \eqref{34-3} can be factored into two parts
\begin{equation*}
\Bigg(\prod_{j=1}^{n}\int_{\mathbb{R}}\frac{\mathrm{d}a_j}{\sqrt{\pi/\beta_j}}\,
e^{-S_0[a,\beta]+isS_1[p_n(a)]}\Bigg)
\Bigg(\prod_{j=n+1}^{+\infty}\int_{\mathbb{R}}\frac{\mathrm{d}a_j}{\sqrt{\pi/\beta_j}}\,
e^{-S_0[a-p_n(a),\beta]}\Bigg),
\end{equation*}
where the first part is the same as \eqref{34-1}, while the second part is equal to unit.
\end{proof}
\begin{lemma}\label{34-l-1} Let the assumptions of Definition \ref{34-d-2} be true. Then for function \eqref{34-1} the following representation is valid
\begin{equation}\label{34-5}
\Phi_n(s,\beta)=\prod_{j=1}^{n}\Big(1-is/\beta_j\Big)^{-\frac{1}{2}}=
f_n(s)e^{ig_n(s)/2},
\end{equation}
where
\begin{equation}\label{34-6}
f_n(s)=\prod_{j=1}^{n}\Big(1+s^2/\beta_j^2\Big)^{-\frac{1}{4}},\,\,\,
g_n(s)=\sum_{j=1}^n\arctan\big(s/\beta_j\big).
\end{equation}
\end{lemma}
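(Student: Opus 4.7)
The plan is to reduce the computation to a product of one-dimensional Gaussian integrals and then convert the result to polar form. By Definition \ref{34-d-2}, $\Phi_n(s,\beta)$ is already written as a product of independent integrals in the variables $a_1,\ldots,a_n$, so it suffices to evaluate a single factor
\begin{equation*}
I_j=\int_{\mathbb{R}}\frac{\mathrm{d}a_j}{\sqrt{\pi/\beta_j}}\,e^{-(\beta_j-is)a_j^2}
\end{equation*}
and take the product over $j=1,\ldots,n$.

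The first step is to apply the standard Gaussian integral formula $\int_{\mathbb{R}}e^{-\alpha x^2}\mathrm{d}x=\sqrt{\pi/\alpha}$, valid for any complex $\alpha$ with $\mathrm{Re}(\alpha)>0$ when the principal branch of the square root is used. Here $\alpha=\beta_j-is$ with $\mathrm{Re}(\alpha)=\beta_j>0$, so this is applicable and yields $I_j=\sqrt{\beta_j/(\beta_j-is)}=(1-is/\beta_j)^{-1/2}$. Taking the product over $j$ gives the first equality in \eqref{34-5}.

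The second step is to separate modulus and phase. Writing $1-is/\beta_j$ in polar form as $r_j e^{i\theta_j}$, and noting that $\mathrm{Re}(1-is/\beta_j)=1>0$, the argument lies in $(-\pi/2,\pi/2)$ and is given unambiguously by $\theta_j=-\arctan(s/\beta_j)$; the modulus is $r_j=\sqrt{1+s^2/\beta_j^2}$. Consequently, under the principal branch,
\begin{equation*}
(1-is/\beta_j)^{-1/2}=r_j^{-1/2}e^{-i\theta_j/2}=\bigl(1+s^2/\beta_j^2\bigr)^{-1/4}\,e^{i\arctan(s/\beta_j)/2}.
\end{equation*}
Multiplying over $j=1,\ldots,n$ and grouping the moduli and phases separately reproduces the definitions \eqref{34-6} of $f_n$ and $g_n$, completing the proof.

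The only point requiring care is the branch choice of the square root, but since each factor $1-is/\beta_j$ has strictly positive real part, the principal branch is the natural one and matches the normalization at $s=0$ (where $\Phi_n(0,\beta)=1$). After that, there is no genuine obstacle: the argument is a direct application of the one-dimensional Gaussian identity followed by polar decomposition.
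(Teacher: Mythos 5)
Your proposal is correct and follows essentially the same route as the paper: explicit evaluation of each one-dimensional Gaussian integral (with $\mathrm{Re}(\beta_j-is)>0$) gives the first equality, and the polar decomposition $1-is/\beta_j=\sqrt{1+s^2/\beta_j^2}\,e^{-i\arctan(s/\beta_j)}$ gives the second. Your extra care about the principal branch is a welcome but minor refinement of the paper's argument.
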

\begin{proof} The first equality in \eqref{34-5} follows from the explicit calculation of Gaussian integrals in \eqref{34-1}. The second equality is a consequence of the representation 
\begin{equation}\label{34-7}
1-is/\beta_j=\sqrt{1+s^2/\beta_j^2}e^{-i\arctan(s/\beta_j)}
\end{equation}
for a complex number in the form of the product of the absolute value by the phase factor.
\end{proof}
\begin{deff}\label{34-d-1}
By symbols $\Phi(s,\beta)$, $f(s)$, and $g(s)$ we notate the limit values of the functions $\Phi_n(s,\beta)$, $f_n(s)$, and $g_n(s)$, when $n\to+\infty$, if such exist.
\end{deff}
\begin{lemma}\label{34-l-2} Let $s\in\mathbb{R}\setminus\{0\}$ and $\beta\in\mathcal{B}_2$. Then, taking into account all of the above, the following statements are true:
	\begin{enumerate}
		\item $0<f(s)<f_{n+1}(s)< f_{n}(s)<1$ for all $n\in\mathbb{N}$;
		\item $f(\cdot)\in L^p(\mathbb{R})$ for all $p>0$;
		\item $g(s)<+\infty$ if and only if $b_1(\beta)<+\infty$;
		\item $\Phi_n(s,\beta)$ has a limit when $n\to+\infty$ if and only if $b_1(\beta)<+\infty$.
	\end{enumerate}
If $s=0$, then we have the relations $\Phi(0,\beta)=1$, $f(0)=1$, and $g(0)=0$.
\end{lemma}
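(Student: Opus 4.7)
The plan is to verify the four assertions in the order given, leaning on the factorization $\Phi_n(s,\beta)=f_n(s)e^{ig_n(s)/2}$ from Lemma \ref{34-l-1} and treating $s\in\mathbb{R}\setminus\{0\}$ as fixed throughout.

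For claim (1) I would first note that every factor $(1+s^2/\beta_j^2)^{-1/4}$ lies strictly in $(0,1)$, which gives the chain $f_{n+1}(s)<f_n(s)<1$ and forces the monotone limit $f(s)\in[0,1)$ to satisfy $f(s)<f_{n+1}(s)$ by one more step of comparison. The substantive point is $f(s)>0$. Passing to logarithms, $-\log f_n(s)=\tfrac14\sum_{j=1}^n\log(1+s^2/\beta_j^2)$, and the elementary bound $\log(1+x)\leqslant x$ shows this is uniformly bounded by $\tfrac{s^2}{4}b_2(\beta)<+\infty$. This is where the hypothesis $\beta\in\mathcal{B}_2$ enters in an essential way.

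For claim (2) I would use the pointwise bound $f\leqslant f_n$ together with a suitable choice of $n$. Given $p>0$, pick $n\in\mathbb{N}$ with $np/2>1$. For $|s|\geqslant 1$ one has $(1+s^2/\beta_j^2)^{-p/4}\leqslant(\beta_j/|s|)^{p/2}$, so $f(s)^p\leqslant C_n|s|^{-np/2}$, which is integrable at infinity; on $[-1,1]$ the bound $f\leqslant 1$ suffices. For claim (3) I would use the asymptotic $\arctan(s/\beta_j)\sim s/\beta_j$ as $j\to+\infty$ (valid because $\beta_j\to+\infty$) to pin down two-sided estimates $c_1|s|/\beta_j\leqslant|\arctan(s/\beta_j)|\leqslant c_2|s|/\beta_j$ for $j$ large. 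Standard comparison then yields that $g(s)$ converges absolutely if and only if $b_1(\beta)<+\infty$.

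Claim (4) is where the main obstacle lies. The forward direction is easy: since $f_n(s)\to f(s)>0$ by (1), convergence of $\Phi_n$ is equivalent to convergence of the unit-modulus factor $e^{ig_n(s)/2}$, and if $b_1(\beta)<+\infty$ then (3) gives $g_n(s)\to g(s)$, hence $\Phi_n\to f(s)e^{ig(s)/2}$. For the converse I must rule out accidental convergence of $e^{ig_n(s)/2}$ when $g_n(s)$ diverges. All summands in $g_n(s)$ share the sign of $s$, so $g_n(s)$ is monotone and, under $b_1(\beta)=+\infty$, diverges to $\mathrm{sgn}(s)\cdot\infty$; at the same time the increments $g_{n+1}(s)-g_n(s)=\arctan(s/\beta_{n+1})$ tend to $0$. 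A monotone divergent sequence with vanishing increments passes within any prescribed $\varepsilon$ of every target modulo $2\pi$, so $\{g_n(s)/2\pmod{2\pi}\}$ is dense in $[0,2\pi)$ and $e^{ig_n(s)/2}$ has no limit. The boundary case $s=0$ is immediate from $\Phi_n(0,\beta)=1$, $f_n(0)=1$, $g_n(0)=0$.
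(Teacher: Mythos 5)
Your proof is correct, and parts (1) and (2) follow essentially the same route as the paper (strict monotonicity of the partial products, the bound $\ln(1+u)\leqslant u$ giving $f(s)\geqslant e^{-s^2b_2(\beta)/4}>0$, and the decay $f_n(s)=O(|s|^{-n/2})$ combined with $f\leqslant f_n$). Where you diverge is in (3) and (4). For (3) you use the asymptotic equivalence $\arctan(s/\beta_j)\sim s/\beta_j$ and a two-sided comparison, which settles the lemma but discards quantitative information; the paper instead proves the uniform estimate $\big|g_n(s)-s\sum_{j\leqslant n}\beta_j^{-1}\big|\leqslant |s|^3 b_2(\beta)/(3\mu(\beta))$ via the identity $\arctan(v)-v=-v^3\int_0^1 t^2(1+v^2t^2)^{-1}\,\mathrm{d}t$, and this stronger form is reused later (it is exactly what produces the renormalized phase $g_{\mathrm{r}}(s)$ in Theorem \ref{34-t-1} and the derivative bounds in Lemma \ref{34-l-4}), so the paper's route buys more downstream. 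For (4) the paper only remarks that the claim follows from (3) and \eqref{34-5}, leaving implicit why $e^{ig_n(s)/2}$ cannot converge accidentally when $g_n(s)$ diverges; your argument --- same sign of all summands, hence monotone divergence of $g_n(s)$ with increments $\arctan(s/\beta_{n+1})\to 0$, hence density of the phase modulo $2\pi$ and non-convergence of the unit-modulus factor (together with $f(s)>0$ from (1)) --- supplies precisely the missing detail and makes the ``only if'' direction fully rigorous. So your version is slightly more self-contained and more careful on the converse, while the paper's version is engineered to feed the later renormalization formulas.
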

\begin{proof} Note that the number $1+s^2/\beta_j^2\geqslant1$, at the same time, equality is achieved only at the point $s=0$. Therefore, for any $n\in\mathbb{N}$ and $s\in\mathbb{R}\setminus\{0\}$ the estimate holds
\begin{equation}\label{34-11}
f_{n+1}(s)=\prod_{j=1}^{n+1}\Big(1+s^2/\beta_j^2\Big)^{-\frac{1}{4}}<
\prod_{j=1}^{n}\Big(1+s^2/\beta_j^2\Big)^{-\frac{1}{4}}=f_{n}(s)<\ldots<1,
\end{equation}	
which is thus also true for the limiting case.
Next, we note that from the relation $\ln(1+u)\leqslant u$, which is true for all $u\geqslant0$, follows the estimate from the bottom of the form
\begin{equation}\label{34-12}
f(s)=\prod_{j=1}^{+\infty}\exp\bigg(-\frac{1}{4}\ln\big(1+s^2/\beta_j^2\big)\bigg)\geqslant
\exp\bigg(-\frac{s^2}{4}\sum_{j=1}^{+\infty}\beta_j^{-2}\bigg)=e^{-s^2b_2(\beta)/4}>0.
\end{equation}
The second property follows from the fact that the function $f_n(s)$ at $s\to\pm\infty$ has the behavior $|s|^{-n/2}$. Thus, for a fixed $p>0$, all functions $f_n(\cdot)$ with index $n>2/p$ belong to $L^p(\mathbb{R})$. Therefore, given the first property, we obtain $f(\cdot)\in L^p(\mathbb{R})$.

Next, consider $v\in\mathbb{R}$, then the chain of equalities is true
\begin{equation}\label{34-8}
\arctan(v)-v=v\int_0^1\frac{\mathrm{d}t}{1+v^2t^2}-v=-v^3\int_0^1\frac{\mathrm{d}t\,t^2}{1+v^2t^2},
\end{equation}
from which follows the estimate of the form
\begin{equation}\label{34-9}
\Big|\arctan(v)-v\Big|\leqslant|v|^3/3.
\end{equation}
Therefore, substituting in the last relation $v=s/\beta_j$ and summing by the index $j$ from 1 to $n$, we obtain the inequality
\begin{equation}\label{34-10}
\bigg|g_n(s)-s\sum_{j=1}^n\beta_j^{-1}\bigg|\leqslant\frac{s^3}{3}\sum_{j=1}^n\beta_j^{-3}
\leqslant\frac{|s|^3b_2(\beta)}{3\mu(\beta)}<+\infty,
\end{equation}
from which the third part of the statement is derived. The fourth statement is a consequence of the third and formula \eqref{34-5}. The last part follows from the direct substitution of $s=0$ in \eqref{34-6}.
\end{proof}
\begin{deff}\label{34-d-3}
Let $\beta\in\mathcal{B}_2$.
A regularizing function is called a function $\rho(\cdot)\in C\big(\mathbb{R}_+,[0,1]\big)$, such that $\rho(0)=1$, and also the property is fulfilled
\begin{equation}\label{34-13}
\sum_{j=1}^{+\infty}\frac{1}{\beta_j(\Lambda)}=r(\Lambda)+\kappa+o(1)<+\infty,
\,\,\,
\mbox{where}\,\,\,\beta_j(\Lambda)=\beta_j\big/\rho\big(\sqrt{\beta/\Lambda}\big),
\end{equation}
for all $\Lambda$ larger than some fixed value. Here is the function $r(\Lambda)$ denotes the singular component with respect to the parameter $\Lambda$ and tends to $+\infty$ for $\Lambda\to+\infty$. The function $\kappa$ is a constant and does not depend on $\Lambda$.
\end{deff}
\begin{exm}\label{34-e-8}
Under the conditions under consideration, the regularizing function can always be found. As an option, we can suggest a characteristic function of the interval $[0,a]$ for some fixed $a>0$. The choice is not unique.
\end{exm}
\begin{remark}\label{34-r-2}
The functions $r(\cdot)$ and $\kappa$ depend on the spectrum and the regularizing function, but these designations are usually omitted for convenience. Also note that for any $\beta\in\mathcal{B}_1$, the equality $r(\Lambda)=0$ holds. 
\end{remark}
\begin{deff}\label{34-d-4}
Let $\beta\in\mathcal{B}_2\setminus\mathcal{B}_1$, $s\in\mathbb{R}$, and $\rho(\cdot)$ is a regularizing function from Definition \ref{34-d-3}. A regularization of the function $\Phi(s,\beta)$ is the transition to the function $\Phi(s,\beta(\Lambda))$ by the following deformation $$\beta=\{\beta_j\}_{j=1}^{+\infty}\to\beta(\Lambda)=\{\beta_j(\Lambda)\}_{j=1}^{+\infty}.$$
The removal of the regularization denotes the limit transition $\Lambda\to+\infty$.
\end{deff}
\begin{deff}\label{34-d-5}
The object $\Phi(s,\beta(\Lambda))$ from Definition \ref{34-d-4} is called renormalizable if by shifting the internal parameters, $s$ and $\beta(\Lambda)$, as well as changing the common multiplier, absolute value or phase, it is possible to obtain a finite limit after removing the regularization. Formally, the general form of the renormalized function has the form
\begin{equation*}
e^{h_1+ih_2}
\Phi\big(s+h_3,\{\beta_j(\Lambda)+h_{3+j}\}_{j=1}^{+\infty}\big),
\end{equation*}
where $h_i=h_i(\Lambda,s,\beta)$ are real for all $i\in\mathbb{N}$. In this case, all nonzero introduced functions should diverge when removing the regularization for all fixed values of $s$ and should tend to zero at $s\to0$ for each fixed value of $\Lambda$.
\end{deff}
\begin{remark}\label{34-r-12}
It follows from Definition \ref{34-d-5} that if the function $\Phi(s,\beta)$ exists and is finite, then the renormalization does not change it.
\end{remark}
\begin{remark}\label{34-r-13}
The renormalization procedure is not unique. If there are divergences, it leads to a family of functions.
\end{remark}
\begin{remark}\label{34-r-5}
Renormalizability in the general case, when considering other functions, may depend on the introduced regularization, therefore, the regularized function $\Phi(s,\beta(\Lambda))$, and not $\Phi(s,\beta)$, is used in Definition \ref{34-d-5}.
\end{remark}
\begin{theorem}\label{34-t-1}
Let $s\in\mathbb{R}$, $\beta\in\mathcal{B}_2\setminus\mathcal{B}_1$, and also the assumptions from Definitions \ref{34-d-3} and \ref{34-d-4} are fulfilled. The function $\Phi(s,\beta(\Lambda))$ is renormalizable. The renormalization process consists of changing the phase multiplier by the transition
\begin{equation}\label{34-14}
\Phi(s,\beta(\Lambda))\to \Phi(s,\beta(\Lambda))e^{-is(r(\Lambda)+\theta)/2},
\end{equation}
where the function $r(\Lambda)$ is from \eqref{34-13}, and $\theta\in\mathbb{R}$ is a free parameter. The process of removing the regularization after the renormalization leads to the function
\begin{equation}\label{34-16}
	\Phi(s,\beta,\theta)=\Phi(s,\beta(\Lambda))e^{-is(r(\Lambda)+\theta)/2}\bigg|_{\Lambda\to+\infty}=
	f(s)\exp\bigg(-\frac{i}{2}\big(s\theta+g_{\mathrm{r}}(s)\big)\bigg),
\end{equation}
where
\begin{equation}\label{34-19}
g_{\mathrm{r}}(s)=-s\kappa+
s^3\int_0^1\mathrm{d}t\,
\sum_{j=1}^{+\infty}\frac{t^2\beta_j^{-3}}{1+s^2t^2/\beta_j^2}.
\end{equation}
\end{theorem}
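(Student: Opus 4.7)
The plan is to start from the representation supplied by Lemma \ref{34-l-1} applied to the regularized spectrum $\beta(\Lambda)$. By Definition \ref{34-d-3}, the series $\sum_{j}\beta_j(\Lambda)^{-1}$ converges, so $\beta(\Lambda)\in\mathcal{B}_1$, and Lemma \ref{34-l-2}(4) guarantees that the limit $n\to+\infty$ exists and yields
\[
\Phi(s,\beta(\Lambda))=f_{\Lambda}(s)\,e^{ig_{\Lambda}(s)/2},
\]
where $f_{\Lambda}$, $g_{\Lambda}$ denote the $n\to+\infty$ limits of the expressions in \eqref{34-6} with $\beta_j$ replaced by $\beta_j(\Lambda)$. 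The whole argument then reduces to isolating the $\Lambda$-divergence inside the phase $g_{\Lambda}(s)$ and showing that everything else converges as $\Lambda\to+\infty$.

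The first step is to apply the exact identity \eqref{34-8} with $v=s/\beta_j(\Lambda)$ and to sum over $j$, which (the summand/integrand being positive) produces by Fubini the decomposition
\[
g_{\Lambda}(s)=s\sum_{j=1}^{+\infty}\beta_j(\Lambda)^{-1}\;-\;s^{3}\int_{0}^{1}\mathrm{d}t\sum_{j=1}^{+\infty}\frac{t^{2}\beta_j(\Lambda)^{-3}}{1+s^{2}t^{2}/\beta_j(\Lambda)^{2}}.
\]
By property \eqref{34-13} of the regularizing function, the first sum equals $s\bigl(r(\Lambda)+\kappa+o(1)\bigr)$, so multiplying $\Phi(s,\beta(\Lambda))$ by $e^{-is(r(\Lambda)+\theta)/2}$ cancels the $r(\Lambda)$-divergence exactly; this is the content of the renormalization move \eqref{34-14}.

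The second step is to let $\Lambda\to+\infty$. Since $\rho\in C(\mathbb{R}_{+},[0,1])$ and $\rho(0)=1$, one has the pointwise inequality $\beta_j(\Lambda)\geqslant\beta_j$ together with $\beta_j(\Lambda)\to\beta_j$ for each fixed $j$. For $f_{\Lambda}$ I would apply dominated convergence to $\ln f_{\Lambda}(s)=-\tfrac{1}{4}\sum_{j}\ln\bigl(1+s^{2}/\beta_j(\Lambda)^{2}\bigr)$: each summand is controlled by $s^{2}/\beta_j^{2}$, which is summable because $\beta\in\mathcal{B}_2$, hence $f_{\Lambda}(s)\to f(s)$. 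For the remaining finite part of the phase the integrand is controlled by $t^{2}\beta_j^{-3}$, which is summable by $\beta\in\mathcal{B}_2\subset\mathcal{B}_3$ (Remark \ref{34-r-11}) and integrable in $t$, so dominated convergence places the $\Lambda$-limit inside both the sum and the integral and yields the explicit expression $g_{\mathrm{r}}(s)$ of \eqref{34-19}. Collecting signs gives the formula \eqref{34-16}.

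The renormalizability claim in the sense of Definition \ref{34-d-5} is then automatic: only the phase counterterm $h_{2}=-s(r(\Lambda)+\theta)/2$ is nonzero, and it diverges as $\Lambda\to+\infty$ for any fixed $s\neq 0$ (because $r(\Lambda)\to+\infty$) and vanishes as $s\to 0$ at fixed $\Lambda$, exactly as required. The main technical obstacle is the interchange of the $\Lambda\to+\infty$ limit with the infinite sum, and this is precisely where the monotonicity $\beta_j(\Lambda)\geqslant\beta_j$ inherited from $\rho\leqslant 1$ provides a $\Lambda$-uniform summable envelope under the weakest possible hypothesis $\beta\in\mathcal{B}_2$; the cubic remainder in \eqref{34-8} is what makes this hypothesis sufficient after the divergent linear part has been subtracted.
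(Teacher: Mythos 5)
Your proposal is correct and follows the same route as the paper's (very terse) proof: the product/phase representation of Lemma \ref{34-l-1} for the regularized spectrum, the identity \eqref{34-8} to split off the divergent series $s\sum_j\beta_j(\Lambda)^{-1}=s(r(\Lambda)+\kappa+o(1))$ which the counterterm $e^{-is(r(\Lambda)+\theta)/2}$ cancels, and passage to the limit $\Lambda\to+\infty$ in the remaining convergent pieces. Your added justifications (Tonelli for the phase decomposition, the $\Lambda$-uniform dominating bounds from $\beta_j(\Lambda)\geqslant\beta_j$ together with $b_2(\beta)<+\infty$, and the explicit check of Definition \ref{34-d-5} with $h_2=-s(r(\Lambda)+\theta)/2$) are exactly the details the paper leaves implicit.
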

\begin{proof} It is necessary to use the representation from Lemma \ref{34-l-1}, the statements from Lemma \ref{34-l-2} and relation \eqref{34-8}. Indeed, in this case, it can be seen that the exponent of the phase multiplier subtracts the singular part of the series, which diverges when the regularization is removed. The remaining part leads to the stated answer.
\end{proof}
\begin{remark}\label{34-r-4}
In a brief form, the manipulations performed above can be represented as the following scheme
\begin{equation}\label{34-15}
	\begin{tikzcd}[column sep=3pc]
		\Phi(s,\beta) \arrow{r}{\footnotesize\mbox{reg.}} & 
		\Phi(s,\beta(\Lambda)) \arrow{r}{\footnotesize\mbox{ren.}} &
		\Phi(s,\beta(\Lambda))e^{-is(r(\Lambda)+\theta)/2} \arrow{r}{\footnotesize\Lambda\to+\infty} &
		\Phi(s,\beta,\theta).
	\end{tikzcd}
\end{equation}
Indeed, starting with the object that was formally the limit of the divergent sequence \eqref{34-1}, that is, it did not actually exist, we got a finite meaningful function. At the same time, such a transition was made in three stages.
\begin{enumerate}
	\item Introduction of the regularization. After that, the object became finite and dependent on the auxiliary regularizing parameter $\Lambda$.
	\item Renormalization. After that, the singular components were removed from the regularized object. At the same time, there was a dependence on the additional parameter $\theta$, which in a sense symbolizes the initial conditions. In applications, this parameter is fixed based on physical and computational considerations.
	\item Removing the regularization. After that, the dependence on the auxiliary regularizing parameter $\Lambda$ disappeared. At the same time, the dependence on the additional parameter $\theta$ remained.
\end{enumerate}
\end{remark}

\begin{deff}\label{34-d-8}
Let $\lambda\geqslant0$ and the assumptions from Definition \ref{34-d-2} are fulfilled. Let us introduce the function
\begin{equation}\label{34-17}
	Z_n(\lambda,\beta)=
	\Bigg(\prod_{j=1}^{n}\int_{\mathbb{R}}\frac{\mathrm{d}a_j}{\sqrt{\pi/\beta_j}}\Bigg)
	\exp\bigg(-S_0[p_n(a)]-\lambda\Big(S_1[p_n(a)]\Big)^2\bigg).
\end{equation}
With the symbol $Z(\lambda,\beta)$ we notate the limit value at $n\to+\infty$, if it exists.
\end{deff}
\begin{lemma}\label{34-l-4}
Let $\beta\in\mathcal{B}_2\setminus\mathcal{B}_1$, then the function $Z(\lambda,\beta)=0$ for all $\lambda>0$. At the same time $Z(1,\beta)=1$.
\end{lemma}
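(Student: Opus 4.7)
The plan is to reinterpret $Z_n(\lambda,\beta)$ as a Gaussian expectation of a function of a sum of scaled squares, and then use the fact that, under $\beta\in\mathcal{B}_2\setminus\mathcal{B}_1$, this sum concentrates near a mean that diverges with $n$.

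Concretely, the measure $\prod_{j=1}^{n}(\beta_j/\pi)^{1/2}e^{-\beta_j a_j^2}\,da_j$ is the joint law of independent Gaussians $a_j\sim N(0,(2\beta_j)^{-1})$. Introducing $T_n:=S_1[p_n(a)]=\sum_{j=1}^{n}a_j^2$, Definition \ref{34-d-8} becomes
\begin{equation*}
Z_n(\lambda,\beta)=\mathbb{E}\bigl[e^{-\lambda T_n^2}\bigr],\qquad
\mathbb{E}[T_n]=\tfrac{1}{2}\sum_{j=1}^{n}\beta_j^{-1},\qquad
\mathrm{Var}(T_n)=\tfrac{1}{2}\sum_{j=1}^{n}\beta_j^{-2}\leqslant\tfrac{b_2(\beta)}{2}.
\end{equation*}
This is exactly where the hypothesis bites: $\beta\in\mathcal{B}_2$ keeps the variance bounded uniformly in $n$, while $\beta\notin\mathcal{B}_1$ forces $\mathbb{E}[T_n]\to+\infty$.

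A one-line Chebyshev estimate then gives, for every fixed $M>0$ and all $n$ with $\mathbb{E}[T_n]>M$,
\begin{equation*}
\mathbb{P}(T_n\leqslant M)\leqslant\frac{\mathrm{Var}(T_n)}{(\mathbb{E}[T_n]-M)^2}\longrightarrow 0,
\end{equation*}
so $T_n\to+\infty$ in probability, and hence $e^{-\lambda T_n^2}\to 0$ in probability. Since this sequence is bounded by $1$, bounded convergence delivers $Z_n(\lambda,\beta)\to 0$, which is the main claim.

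The second printed statement, $Z(1,\beta)=1$, appears to contradict the first (since $\lambda=1>0$ is covered by the Chebyshev argument above); the reading consistent with Definition \ref{34-d-8} is $Z(0,\beta)=1$, which is immediate because at $\lambda=0$ every factor in \eqref{34-17} reduces to a normalized Gaussian integral equal to $1$, so $Z_n(0,\beta)=1$ for all $n$. I do not foresee any real obstacle; the only mildly delicate step is upgrading "convergence in probability" to "convergence in expectation", which is automatic from the uniform bound $0\leqslant e^{-\lambda T_n^2}\leqslant 1$. As a consistency check, one could instead apply a Hubbard–Stratonovich identity to rewrite $Z_n(\lambda,\beta)=(4\pi\lambda)^{-1/2}\int e^{-\sigma^2/(4\lambda)}\Phi_n(\sigma,\beta)\,d\sigma$ and deduce the vanishing from the unbounded oscillation of $g_n$ recorded in Lemma \ref{34-l-2}; that route reuses the paper's earlier work on $\Phi_n$ but requires harder oscillatory-integral estimates than the probabilistic path sketched above, so the Chebyshev approach is the one I would actually write up.
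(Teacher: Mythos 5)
Your proof is correct, but it takes a genuinely different route from the paper. You treat the normalized weight as the joint law of independent Gaussians $a_j\sim N\big(0,(2\beta_j)^{-1}\big)$ and argue by concentration: $\mathbb{E}[T_n]=\tfrac12\sum_{j\leqslant n}\beta_j^{-1}\to+\infty$ because $\beta\notin\mathcal{B}_1$, while $\mathrm{Var}(T_n)\leqslant\tfrac12 b_2(\beta)$ stays bounded because $\beta\in\mathcal{B}_2$, so Chebyshev gives $T_n\to+\infty$ in probability and the bound $Z_n(\lambda,\beta)\leqslant e^{-\lambda M^2}+\mathrm{Var}(T_n)/(\mathbb{E}[T_n]-M)^2$ finishes the job; no common probability space or dominated-convergence subtlety is really needed. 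The paper instead uses the transform $\mathrm{T}$ of \eqref{34-20} to write $Z_n$ as a Gaussian-weighted integral of $\Phi_n(s,\beta)$, isolates the divergent linear phase $e^{isc_n/2}$ with $c_n=\sum_{j\leqslant n}\beta_j^{-1}$, and integrates by parts with the uniform-in-$n$ bounds \eqref{34-25}--\eqref{34-24} to get $|Z_n|\leqslant C(\lambda,\beta)/c_n\to0$ — exactly the oscillatory-integral route you mention but set aside. Your argument is shorter and more elementary for this lemma; the paper's argument buys reusable machinery (the operator $\mathrm{T}$, the derivative estimates, and the explicit identification of the vanishing of $Z$ with the rapid oscillation of the phase of $\Phi$), which is then exploited in Theorem \ref{34-t-2} and in the comparison of the two "approaches" in Remark \ref{34-r-9}. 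Finally, you are right that the printed claim $Z(1,\beta)=1$ is inconsistent with the first assertion; the paper's own proof ("direct substitution") only supports $Z(0,\beta)=1$, i.e.\ the reading you adopt, and since Definition \ref{34-d-8} allows $\lambda=0$ this is evidently a typo rather than a gap in your argument.
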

\begin{proof} Note that the functions from \eqref{34-1} and \eqref{34-17} are related by the following integral transformation
\begin{equation}\label{34-20}
	Z_n(\lambda,\beta)=\mathrm{T}\big(\Phi_n(\,\cdot\,,\beta)\big)(\lambda)=
	\frac{1}{\sqrt{4\pi\lambda}}
	\int_{\mathbb{R}}\mathrm{d}s\,e^{-s^2/(4\lambda)}
	\Phi_n(s,\beta),
\end{equation}
which follows from the calculation of the standard Gaussian integral. Define a set of numbers and a function
\begin{equation}\label{34-23}
c_n=\sum_{j=1}^n\beta_j^{-1},\,\,\,h_n(s)=\frac{1}{\sqrt{4\pi\lambda}}e^{-s^2/(4\lambda)}
f_n(s)e^{i(g_n(s)-sc_n)/2}.
\end{equation}
Due to exponential decrease and properties from Lemma \ref{34-l-2}, the function $h_n(s)$ and any of its derivatives belong to $L^1(\mathbb{R})$. Besides, the estimates are correct
\begin{equation}\label{34-25}
\bigg|\frac{\mathrm{d}f_n(s)}{\mathrm{d}s}\bigg|\leqslant\frac{|s|b_2(\beta)}{2},\,\,\,
\bigg|\frac{1}{2}\frac{\mathrm{d}(g_n(s)-sc_n)}{\mathrm{d}s}\bigg|\leqslant
\sum_{j=1}^n\frac{s^2}{2\beta_j^3}\leqslant\frac{s^2b_2(\beta)}{2\mu(\beta)},
\end{equation}
\begin{equation}\label{34-24}
\bigg|\frac{\mathrm{d}h_n(s)}{\mathrm{d}s}\bigg|\leqslant
\frac{e^{-s^2/(4\lambda)}}{\sqrt{4\pi\lambda}}
\bigg(\frac{|s|}{2\lambda}+\frac{|s|b_2(\beta)}{2}+\frac{s^2b_2(\beta)}{2\mu(\beta)}\bigg).
\end{equation}
Note that the right-hand sides do not depend on the parameter $n$, so the inequalities are also valid for the limiting case.
Next, substituting the representation from \eqref{34-5} into \eqref{34-20} and integrating it by parts, we get
\begin{equation}\label{34-26}
Z_n(\lambda,\beta)=\int_{\mathbb{R}}\mathrm{d}s\,e^{isc_n/2}h_n(s)=-
\frac{2}{ic_n}\int_{\mathbb{R}}\mathrm{d}s\,e^{isc_n/2}\frac{\mathrm{d}h_n(s)}{\mathrm{d}s}.
\end{equation}
Therefore, passing to the absolute values of the functions, we obtain an estimate of the form
\begin{equation}\label{34-18}
|Z_n(\lambda,\beta)|\leqslant
\frac{2}{c_n}
\Bigg(
\int_{\mathbb{R}}\mathrm{d}s\,\frac{e^{-s^2/(4\lambda)}}{\sqrt{4\pi\lambda}}
\bigg(\frac{|s|}{2\lambda}+\frac{|s|b_2(\beta)}{2}+\frac{s^2b_2(\beta)}{2\mu(\beta)}\bigg)\Bigg),
\end{equation}
in which the right-hand side tends to zero at $n\to+\infty$ for all fixed $\lambda>0$. The final equality $Z(1,\beta)=1$ follows from direct substitution.
\end{proof}
\begin{theorem}\label{34-t-2}
Let $\lambda>0$, $a\in\mathbb{R}$, and also $\beta\in\mathcal{B}_2\setminus\mathcal{B}_1$. Let us also assume the deformed element $\beta(\Lambda)$ and the function $r(\Lambda)$ are from Definition \ref{34-d-3}. Then the regularization and renormalization process for $Z(\lambda,\beta)$, taking into account all of the above, can be performed according to the following scheme
\begin{equation}\label{34-32}
	\begin{tikzcd}[column sep=3pc]
		\Phi(s,\beta) \arrow{r}{\footnotesize\mbox{reg.}} & 
		\Phi(s,\beta(\Lambda)) \arrow{r}{\footnotesize\mbox{ren.}} \arrow{d}{\mathrm{T}} &
		\Phi(s,\beta(\Lambda))e^{-is(r(\Lambda)+\theta)/2} \arrow{r}{\footnotesize\Lambda\to+\infty} \arrow{d}{\mathrm{T}} &
		\Phi(s,\beta,\theta)\arrow{d}{\mathrm{T}}\\
		Z(\lambda,\beta) \arrow{r}{\footnotesize\mbox{reg.}}  & 
		Z(\lambda,\beta(\Lambda)) \arrow{r}{\footnotesize\mbox{ren.}} &
		Z(\lambda,\beta_{\mathrm{r}}(\Lambda))N(\Lambda) \arrow{r}{\footnotesize\Lambda\to+\infty} &
		Z(\lambda,\beta,\theta)
	\end{tikzcd},
\end{equation}
where the multiplier is
\begin{equation}\label{34-31}
N(\Lambda) 
=e^{-(r(\Lambda)+\theta)^2/4}\prod_{j=1}^{+\infty}\sqrt{\beta(\Lambda)/\beta_{\mathrm{r}}(\Lambda)},
\end{equation}
the renormalized element $\beta_{\mathrm{r}}(\Lambda)\in V$ has the form $\{\beta_j(\Lambda)-r(\Lambda)-\theta\}_{j=1}^{+\infty}$, and we also used the integral operator $\mathrm{T}$ from the proof of Lemma \ref{34-l-4}, the kernel of which is $\exp\big(-s^2/(4\lambda)\big)/\sqrt{4\pi\lambda}$. The resulting function, after removing the regularization, is nonzero for almost all values of $\theta\in\mathbb{R}$ and is equal to
\begin{equation}\label{34-22}
Z(\lambda,\beta,\theta)=\frac{1}{\sqrt{\pi\lambda}}
\int_{\mathbb{R}}\mathrm{d}s\,e^{-s^2/(4\lambda)}f(s)\cos\big(s\theta/2+g_{\mathrm{r}}(s)/2\big),
\end{equation}
where the functions $f(\cdot)$ and $g_{\mathrm{r}}(\cdot)$ are the same as in the formulas \eqref{34-16} and \eqref{34-19}.
\end{theorem}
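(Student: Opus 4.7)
The strategy is to show that diagram \eqref{34-32} commutes column by column, exploiting the intertwining identity $\mathrm{T}[\Phi_n(\cdot,\beta)](\lambda)=Z_n(\lambda,\beta)$ established in the proof of Lemma~\ref{34-l-4}. Commutativity of the two leftmost squares is immediate from this identity together with Definition~\ref{34-d-4} and a passage to $n\to+\infty$; the content of the theorem lies in the remaining two squares and in the identification of the limiting formula.

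For the third square I would begin from
\[
\mathrm{T}\bigl[\Phi_n(\cdot,\beta(\Lambda))e^{-i(\cdot)(r(\Lambda)+\theta)/2}\bigr](\lambda)=\frac{1}{\sqrt{4\pi\lambda}}\int_{\mathbb{R}}\mathrm{d}s\,e^{-s^{2}/(4\lambda)-is(r(\Lambda)+\theta)/2}\,\Phi_n(s,\beta(\Lambda)),
\]
substitute the definition \eqref{34-1} of $\Phi_n$, and interchange the $s$-integral with the finite product of $a_j$-integrals via Fubini (justified by the bound $|\Phi_n|\le 1$ against the Gaussian kernel). Performing the inner $s$-integral by completing the square produces the factor
\[
\exp\bigl(-\lambda\bigl(S_1[p_n(a)]-(r(\Lambda)+\theta)/2\bigr)^{2}\bigr).
\]
Expanding this square isolates the three contributions that build $N(\Lambda)$ and $\beta_{\mathrm{r}}(\Lambda)$: a numerical factor of type $e^{-\lambda(r(\Lambda)+\theta)^{2}/4}$, a linear-in-$a_{j}^{2}$ shift deforming the quadratic action from $\beta_{j}(\Lambda)$ to $\beta_{\mathrm{r},j}(\Lambda)$, and the quartic term $-\lambda(S_1[p_n(a)])^{2}$ defining $Z_n$. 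Re-normalising each one-dimensional Gaussian measure from weight $\sqrt{\beta_{j}(\Lambda)/\pi}$ to $\sqrt{\beta_{\mathrm{r},j}(\Lambda)/\pi}$ then extracts the product $\prod_{j=1}^{n}\sqrt{\beta_{j}(\Lambda)/\beta_{\mathrm{r},j}(\Lambda)}$, and the passage $n\to+\infty$ gives the claimed equality with the middle entry $Z(\lambda,\beta_{\mathrm{r}}(\Lambda))\,N(\Lambda)$.

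For the fourth square I would commute $\mathrm{T}$ with the limit $\Lambda\to+\infty$ via dominated convergence: by Theorem~\ref{34-t-1} the $\Lambda$-family converges pointwise in $s$ to $e^{-s^{2}/(4\lambda)}\Phi(s,\beta,\theta)$, while it is bounded uniformly in $\Lambda$ by $e^{-s^{2}/(4\lambda)}f(s)$, which is integrable on $\mathbb{R}$ by items~(1)-(2) of Lemma~\ref{34-l-2}. Inserting the explicit form \eqref{34-16} of the limit $\Phi(s,\beta,\theta)$, and using the evenness of $f$ together with the oddness of $s\theta+g_{\mathrm{r}}(s)$ to discard the sine part of $e^{-i(s\theta+g_{\mathrm{r}}(s))/2}$, yields the cosine representation \eqref{34-22}. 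Finally, $\theta\mapsto Z(\lambda,\beta,\theta)$ is real-analytic in $\theta$ (differentiation under the integral being legitimate because of the Gaussian weight), so either it vanishes identically or its zero set has Lebesgue measure zero; the former is excluded by evaluating at $\theta=0$, where the integrand is non-negative near $s=0$ and not identically zero.

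I expect the main obstacle to be the third square, where one must control an infinite-dimensional Gaussian measure under a regulator-dependent linear shift of its weights. In particular, as $\Lambda\to+\infty$ the shift $r(\Lambda)+\theta$ diverges while $\beta_{j}(\Lambda)$ remains bounded for each fixed $j$, so the individual $\beta_{\mathrm{r},j}(\Lambda)$ eventually become negative; the factors in $\prod_{j}\sqrt{\beta_{j}(\Lambda)/\beta_{\mathrm{r},j}(\Lambda)}$ must therefore be interpreted via a suitable branch, and convergence of the whole product has to be read off from the convergence of the underlying regularised Gaussian integrals, ultimately controlled by the assumption $\beta\in\mathcal{B}_{2}$.
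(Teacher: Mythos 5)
Your route coincides with the paper's own proof: apply the operator $\mathrm{T}$ to the phase-renormalized function $\Phi(s,\beta(\Lambda))e^{-is(r(\Lambda)+\theta)/2}$, compute the Gaussian $s$-integral against the product representation (you do this at finite $n$ and then pass to $n\to+\infty$; the paper works directly with the representation of Lemma \ref{34-l-3}), read off the shift $\beta(\Lambda)\to\beta_{\mathrm{r}}(\Lambda)$ together with the multiplier $N(\Lambda)$, interchange $\Lambda\to+\infty$ with the $s$-integration, and use evenness of $f$ and oddness of $g_{\mathrm{r}}$ to arrive at \eqref{34-22}; the branch/negativity subtlety you flag for $\prod_j\sqrt{\beta_j(\Lambda)/\beta_{\mathrm{r},j}(\Lambda)}$ is exactly the content of Remark \ref{34-r-6}. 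One technical slip: the domination you invoke for the last square is reversed. Since $\rho\leqslant1$ one has $\beta_j(\Lambda)\geqslant\beta_j$, hence $\prod_j\big(1+s^2/\beta_j^2(\Lambda)\big)^{-1/4}\geqslant f(s)$, so $e^{-s^2/(4\lambda)}f(s)$ is a lower bound, not an upper bound, for the regularized integrand. The interchange survives because $|\Phi(s,\beta(\Lambda))|\leqslant1$, so the kernel $e^{-s^2/(4\lambda)}$ itself dominates (the paper argues instead via uniform convergence), but as written the justification is wrong.

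The genuine gap is in the final step. To exclude $Z(\lambda,\beta,\cdot)\equiv0$ you evaluate at $\theta=0$ and argue that ``the integrand is non-negative near $s=0$ and not identically zero''; this proves nothing, because the integral in \eqref{34-22} runs over all of $\mathbb{R}$ and $\cos\big(g_{\mathrm{r}}(s)/2\big)$ can be negative for larger $|s|$ while $f(s)>0$ there (Lemma \ref{34-l-2}), so $Z(\lambda,\beta,0)$ could a priori vanish by cancellation. A correct way to close this: using evenness of $f$ and oddness of $g_{\mathrm{r}}$, rewrite $Z(\lambda,\beta,\theta)$ as the Fourier transform in $\theta$ of the $L^1$ function $s\mapsto e^{-s^2/(4\lambda)}f(s)e^{-ig_{\mathrm{r}}(s)/2}$, which is not a.e.\ zero since $f>0$; injectivity of the Fourier transform then forbids identical vanishing on $\mathbb{R}$, and your real-analyticity in $\theta$ (or the paper's holomorphy in the strip $\mathcal{P}$, with isolated zeros) yields the ``nonzero for almost all $\theta$'' claim. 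It is worth noting that the paper also leaves the ``not identically zero'' point implicit, so you correctly identified the step that needs an argument --- but the argument you supply does not establish it.
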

\begin{proof} In formula \eqref{34-20} it was shown that the functions $\Phi_n(s,\beta)$ and $Z_n(\lambda,\beta)$ are connected by an integral transformation. At the same time, when going to the limit $n\to+\infty$, the function $\Phi(s,\beta)$ does not exist for all $s\in\mathbb{R}\setminus\{0\}$, see Lemma \ref{34-l-2}, while the function $Z(\lambda,\beta)=0$ for all $\lambda>0$, see Lemma \ref{34-l-4}. Let us regularize both objects according to Definition \ref{34-d-4}, that is, by deforming the element $\beta\to\beta(\Lambda)$. In this case, the limit function $\Phi(s,\beta(\Lambda))$ become finite due to the convergence of the series \eqref{34-13} and the third property from Lemma \ref{34-l-2}. At the same time, for the limit function $Z(\lambda,\beta(\Lambda))$ the representation is correct
\begin{equation}\label{34-27}
Z(\lambda,\beta(\Lambda))=\frac{1}{\sqrt{4\pi\lambda}}
\int_{\mathbb{R}}\mathrm{d}s\,e^{-s^2/(4\lambda)}\Phi(s,\beta(\Lambda)),
\end{equation}
this is again verified by a simple calculation of the Gaussian integral. Note that in the regularized case, we have rearranged the integration from the transformation $\mathrm{T}$ and the transition to the limit $n\to+\infty$. This is possible because, see Theorem 1 in paragraph 3 of chapter 14 of the monograph \cite{34-f-1}, firstly, the integral function tends to the limit
\begin{equation*}
e^{-s^2/(4\lambda)}\Phi_n(s,\beta(\Lambda))\to
e^{-s^2/(4\lambda)}\Phi(s,\beta(\Lambda))
\end{equation*}
uniformly with respect to the variable $s$, see Lemma \ref{34-l-2}, and secondly, the integral itself is uniformly convergent with respect to the index $n$.

Next, check that the function $Z(\lambda,\beta(\Lambda))$ is renormalizable. To do this, it must be shown that the singular components can be removed by a suitable shift, $\lambda$ and/or $\beta(\Lambda)$, and changing the common multiplier. Note that the problem with removing the regularization again lies in the singular behavior of the series from \eqref{34-13}, because this leads to large fluctuations in the phase multiplier in \eqref{34-27} and, as a result, to the zeros of the function $Z(\lambda,\beta)$. In this regard, let us first try to renormalize the function $\Phi(s,\beta(\Lambda))$, and then check that this actually leads to a positive result for $Z(\lambda,\beta(\Lambda))$. Thus, a potential candidate for a renormalized function has the form
\begin{equation}\label{34-28}
\frac{1}{\sqrt{4\pi\lambda}}
\int_{\mathbb{R}}\mathrm{d}s\,e^{-s^2/(4\lambda)}\Phi(s,\beta(\Lambda))e^{-is(r(\Lambda)+\theta)/2}.
\end{equation}
Let us use the representation from \eqref{34-3}, then, taking into account the notation from \eqref{34-2}, the last integral is calculated explicitly and is equal to
\begin{equation}\label{34-29}
\Bigg(\prod_{j=1}^{+\infty}\int_{\mathbb{R}}\frac{\mathrm{d}a_j}{\sqrt{\pi/\beta_j(\Lambda)}}\Bigg)
e^{-S_0[a,\beta(\Lambda)]-\lambda\big(S_1[a]-r(\Lambda)/2-\theta/2\big)^2},
\end{equation}
which corresponds to the shift of the element
\begin{equation}\label{34-30}
\beta(\Lambda)\to\beta_{\mathrm{r}}(\Lambda)=\{\beta_j(\Lambda)-\lambda r(\Lambda)-\lambda \theta\}_{j=1}^{+\infty}
\end{equation}
and multiplication by the common multiplier \eqref{34-31}.
Thus, renormalizability is verified. Next, let us move on to the last column of schema \eqref{34-32}. First, let us go to the limit $\Lambda\to+\infty$. The permutation of the limit transition and the integral in \eqref{34-28} is possible due to the presence of uniform convergence. Thus, formula \eqref{34-22} is obtained using the result of Theorem \ref{34-t-1}. 

Let us show that for a fixed $\lambda>0$, the function from \eqref{34-22} is nonzero for almost all $\theta\in\mathbb{R}$. To do this, note that the function $Z(\lambda,\beta,\,\cdot\,)$ is holomorphic in the domain
$$\mathcal{P}=\{z\in\mathbb{C}:|\mathrm{Im}(z)|\leqslant p\}$$
for any fixed $p>0$. Indeed, for $\theta\in\mathcal{P}$ the Taylor series for the function from \eqref{34-22} has the form
\begin{equation}\label{34-33}
	\sum_{k=0}^{+\infty}\frac{\theta^k}{k!}\frac{1}{\sqrt{4\pi\lambda}}
	\int_{\mathbb{R}}\mathrm{d}s\,e^{-s^2/(4\lambda)}f(s)\exp\big(-ig_{\mathrm{r}}(s)/2\big)\bigg(-\frac{is}{2}\bigg)^k
\end{equation}
and it converges absolutely, because, taking into account Lemma \ref{34-l-2}, it admits the followin estimate
\begin{equation*}
\big|\eqref{34-32}\big|\leqslant
\sum_{k=0}^{+\infty}\frac{|\theta|^k}{k!}\frac{1}{\sqrt{4\pi\lambda}}
\int_{\mathbb{R}}\mathrm{d}s\,e^{-s^2/(4\lambda)}\frac{|s|^k}{2^k}=
\frac{1}{\sqrt{4\pi\lambda}}
\int_{\mathbb{R}}\mathrm{d}s\,e^{-s^2/(4\lambda)+|s\theta|/2}<+\infty.
\end{equation*}
Thus, $Z(\lambda,\beta,\,\cdot\,)$ is holomorphic in $\mathcal{P}$ and cannot have a set with limit point inside the domain. Therefore, on any limited interval of $\mathbb{R}$, the function can have only a finite number of zeros.
\end{proof}
\begin{remark}\label{34-r-6}
Note that during the renormalization process, some of the elements of the sequence $\beta\in\mathcal{B}_2$ shifted to negative values, so that $\beta_{\mathrm{r}}(\Lambda)\in V$. Thus, the quadratic form $S_0[\,\cdot\,,\,\cdot\,]$ has ceased to be strictly positive, which leads to additional exponential growth in the integrand of \eqref{34-17}. The rigorous mathematical calculations outlined above can be reformulated on a qualitative level as follows: the exponential growth that arose after the renormalization "balances" the exponential decrease of the fourth degree.
\end{remark}
\section{Physical approach}
\label{34:sec:p}
When integrating polynomials in the finite-dimensional case, as demonstrated by Example \ref{34-e-1} below, the integral and the sum can be rearranged. This makes it possible to reduce the integration process to exponential differentiation, or, to formulate it in physical language, to reduce integration to the application of Wick's theorem on pairings. However, even for the simplest example, see \eqref{34-f-7}, this possibility arises only with significant restrictions on the parameters. Thus, in the field of acceptable parameters, mathematical and physical approaches are equivalent. In other cases, the physical approach operates only with formal (or asymptotic) series. Note that the example below corresponds to a simple case. In real problems, higher-order exponentials arise, that is, $\exp(isa_j^2)\to\exp(isa_j^k)$, where $k>2$, which either leads to even more significant restrictions on the parameters, or makes comparing approaches impossible if the radius of convergence is zero.
\begin{exm}\label{34-e-1} Let $k\in\mathbb{N}$ and the assumptions of Definition \ref{34-d-2} and Lemma \ref{34-l-3} are fulfilled, then the chain of relations is valid
\begin{multline}\label{34-f-5}
\Bigg(\prod_{j=1}^{n}\int_{\mathbb{R}}\frac{\mathrm{d}a_j}{\sqrt{\pi/\beta_j}}
\,e^{-\beta_j^{\phantom{1}}a_j^2}\Bigg)\Big(S_1[p_n(a)]\Big)^k=
\bigg(\sum_{i=1}^n\partial_{c_i}^2\bigg)^k
\Bigg(\prod_{j=1}^{n}\int_{\mathbb{R}}\frac{\mathrm{d}a_j}{\sqrt{\pi/\beta_j}}
\,e^{-\beta_j^{\phantom{1}}a_j^2+c_ja_j}\Bigg)\Bigg|_{c=0}=\\=
\bigg(\sum_{i=1}^n\partial_{c_i}^2\bigg)^k\exp
\bigg(\frac{1}{4}\sum_{j=1}^nc_j^2/b_j\bigg)\bigg|_{c=0}.
\end{multline}
In particular, if we select $n=1$ and $\beta_1=1$, then the result is equal to
\begin{equation}\label{34-f-6}
\partial_{t}^{2k}e^{t^2/4}\Big|_{t=0}=\frac{(2k)!}{k!2^{2k}}.
\end{equation}
Next, multiplying by $(is)^k/k!$ and applying summation over the index $k$, we get
\begin{equation}\label{34-f-7}
\sum_{k=0}^{+\infty}\frac{(is)^k}{k!}\int_{\mathbb{R}}\frac{\mathrm{d}t}{\sqrt{\pi}}\,e^{-t^2}t^{2k}=
\sum_{k=0}^{+\infty}\frac{(is)^k}{k!}\frac{(2k)!}{k!2^{2k}}.
\end{equation}
The last sum converges for $|s|<1$ and is equal to $(1-is)^{-1/2}$, which reproduces the result from the formula \eqref{34-5}. However, for the remaining values of the parameter $s$, the permutation of the integral and the sum is prohibited.
\end{exm}
\begin{remark}\label{34-r-7} In some cases, within the framework of the physical approach, it is possible to use special summations that allow to restore the function and, thus, move into the framework of the mathematical approach. An example is Borel summation, see \cite{34-ha,34-hb}.
\end{remark}

\begin{deff}\label{34-d-9} Let $a\in V$, and also $\beta\in\mathcal{B}_2\setminus\mathcal{B}_1$ according to Definition \ref{34-d-7}. Let us apply the regularization based on Definition \ref{34-d-3}. Let us introduce three elements of diagrammatic technique.
\begin{enumerate}
	\item The dot $\bullet$ denotes the summation operator for all index values.
	\item The line with one index ${\centering\adjincludegraphics[width = 1 cm, valign=c]{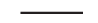}}j$
	denotes the component $a_j$ for $j\in\mathbb{N}$.
	\item The line with two indices $i{\centering\adjincludegraphics[width = 1 cm, valign=c]{34-5.eps}}j$ denotes the value $\delta_{ij}(2\beta_j(\Lambda))^{-1}$, where $i,j\in\mathbb{N}$.
\end{enumerate}
\end{deff}
\begin{exm}\label{34-e-2} Let $a\in V$ be summable with a square, that is, $S_1[a]<+\infty$, then
\begin{equation}\label{34-f-1}
	S_1[a]=\sum_{j=1}^{+\infty}a_j^2=
	{\centering\adjincludegraphics[width = 1.5 cm, valign=c]{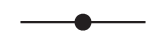}},\,\,\,
	\sum_{i,j=1}^{+\infty}a_i\delta_{ij}(2\beta_j(\Lambda))^{-1}a_j=
	{\centering\adjincludegraphics[width = 2.1 cm, valign=c]{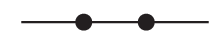}}.
\end{equation}
\end{exm}
\begin{lemma}\label{34-l-5}
Taking into account the assumptions from Definition \ref{34-d-9}, the number $2^{-k}b_k(\beta(\Lambda))$ from Definition \ref{34-d-7}, where $k\in\mathbb{N}$, can be represented as a circle with $k$ dots.
\end{lemma}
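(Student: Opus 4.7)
The plan is essentially to unpack the definition of a circle with $k$ dots in the diagrammatic technique of Definition \ref{34-d-9} and show that the resulting expression coincides algebraically with $2^{-k}b_k(\beta(\Lambda))$. Concretely, a circle with $k$ dots consists of $k$ vertices joined cyclically by $k$ two-index lines; labeling the vertex indices $i_1,\ldots,i_k$ (with $i_{k+1}\equiv i_1$ by cyclicity), each line between $i_m$ and $i_{m+1}$ contributes a factor $\delta_{i_m i_{m+1}}\big(2\beta_{i_{m+1}}(\Lambda)\big)^{-1}$, and each dot contributes a summation over its index from $1$ to $+\infty$.

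I would then write the full symbolic value of the diagram as
\begin{equation*}
\sum_{i_1,\ldots,i_k=1}^{+\infty}\prod_{m=1}^{k}\delta_{i_m i_{m+1}}\bigl(2\beta_{i_{m+1}}(\Lambda)\bigr)^{-1},
\end{equation*}
and carry out the sums over the Kronecker symbols one after another: each $\delta_{i_m i_{m+1}}$ identifies neighboring indices, so iterating around the cycle forces $i_1=i_2=\cdots=i_k$. The remaining single free index (call it $i$) ranges over $\mathbb{N}$, and the product collapses to $\bigl(2\beta_i(\Lambda)\bigr)^{-k}=2^{-k}\beta_i(\Lambda)^{-k}$.

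Finally, summing the remaining factor over $i$ yields
\begin{equation*}
2^{-k}\sum_{i=1}^{+\infty}\beta_i(\Lambda)^{-k}=2^{-k}b_k(\beta(\Lambda))
\end{equation*}
by the very definition of $b_k$ in Definition \ref{34-d-7}, which is the claimed identity.

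There is no real obstacle here: the statement is purely a bookkeeping exercise translating a graph into its symbolic value. The only minor subtlety worth flagging in the write-up is the cyclic identification $i_{k+1}\equiv i_1$ (so that the diagram is genuinely closed and there is no loose external index), together with absolute convergence of the resulting single sum, which is guaranteed since $\beta(\Lambda)\in\mathcal{B}_k$ by Remark \ref{34-r-11} and the fact that $\beta\in\mathcal{B}_2\setminus\mathcal{B}_1\subset\mathcal{B}_k$ for $k\geqslant 2$ (and for $k=1$ the series $b_1(\beta(\Lambda))$ converges by the defining property \eqref{34-13} of the regularization).
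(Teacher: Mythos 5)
Your proof is correct and is essentially the same as the paper's: both identify the circle with $k$ dots with the cyclic multiple sum $\sum_{j_1,\ldots,j_k}\prod_m \delta_{j_m j_{m+1}}(2\beta_{j_{m+1}}(\Lambda))^{-1}$ and collapse the Kronecker deltas to obtain $2^{-k}b_k(\beta(\Lambda))$, merely running the computation in the opposite direction (diagram to number rather than number to diagram). The extra remarks on the cyclic index identification and on convergence are harmless additions, not a different method.
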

\begin{proof} Note that the definition of the number can be rewritten using a multiple summation
\begin{equation}\label{34-f-8}
2^{-k}\sum_{j=1}^{+\infty}\beta_j^{-k}(\Lambda)=
\sum_{j_1,\ldots,j_k=1}^{+\infty}
\Big(\delta_{j_1j_2}\big(2\beta_{j_1}(\Lambda)\big)^{-1}\Big)\cdot\ldots\cdot
\Big(\delta_{j_kj_1}\big(2\beta_{j_k}(\Lambda)\big)^{-1}\Big),
\end{equation}
then the statement follows from the application of the 1st and 3rd elements of the diagrammatic technique.
\end{proof}
\begin{exm}\label{34-e-3}
Taking into account the assumptions from Definition \ref{34-d-9} and the result of Lemma \ref{34-l-5}, the equalities are true
\begin{equation}\label{34-f-9}
b_1(\beta(\Lambda))=2
{\centering\adjincludegraphics[width = 0.9  cm, valign=c]{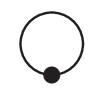}},\,\,\,
b_2(\beta(\Lambda))=4
{\centering\adjincludegraphics[width = 1.1 cm, valign=c]{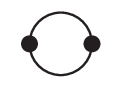}},\,\,\,
b_3(\beta(\Lambda))=8
{\centering\adjincludegraphics[width = 1 cm, valign=c]{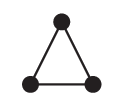}}.
\end{equation}
\end{exm}

\begin{deff}\label{34-d-10}
Let the assumptions from Definition \ref{34-d-9} be true. The diagrammatic representation \eqref{34-f-1} for $S_1[a]$ is called a vertex and denoted by the same symbol $S_1$ without an argument. The symbol $\mathbb{H}$ denotes a linear operator from the set of polynomials $\mathbb{C}[S_1]$ to $\mathbb{C}$, which is defined on monomials $S_1^k$ by two rules.
\begin{enumerate}
	\item If $k=0$, then $\mathbb{H}(S_1^0)=\mathbb{H}(1)=1$.
	\item If $k\in\mathbb{N}$, then $\mathbb{H}(S_1^k)$ is equal to the sum of all possible pairwise connections of $2k$ pieces of external lines for $k$ vertices $S_1$.
\end{enumerate}
\end{deff}
\begin{exm}\label{34-e-4}
Taking into account Definitions \ref{34-d-9} and \ref{34-d-10}, the following relations hold
\end{exm}
\begin{equation}\label{34-f-2}
\mathbb{H}(S_1^{\phantom{1}})=
{\centering\adjincludegraphics[width = 0.9 cm, valign=c]{34-2.eps}}=
\frac{1}{2}b_1(\beta(\Lambda))
,
\end{equation}
\begin{equation}\label{34-f-4}
\mathbb{H}(S_1^2)=
{\centering\adjincludegraphics[width = 0.9  cm, valign=c]{34-2.eps}}
{\centering\adjincludegraphics[width = 0.9  cm, valign=c]{34-2.eps}}
+2
{\centering\adjincludegraphics[width = 1.1 cm, valign=c]{34-3.eps}}=
\frac{1}{4}b_1^2(\beta(\Lambda))+
\frac{1}{2}b_2(\beta(\Lambda))
,
\end{equation}
\begin{equation}\label{34-f-3}
\mathbb{H}(S_1^3)=
{\centering\adjincludegraphics[width = 0.9  cm, valign=c]{34-2.eps}}
{\centering\adjincludegraphics[width = 0.9  cm, valign=c]{34-2.eps}}
{\centering\adjincludegraphics[width = 0.9  cm, valign=c]{34-2.eps}}+6
{\centering\adjincludegraphics[width = 0.9  cm, valign=c]{34-2.eps}}
{\centering\adjincludegraphics[width = 1.1 cm, valign=c]{34-3.eps}}+8
{\centering\adjincludegraphics[width = 1 cm, valign=c]{34-4.eps}}=
\frac{1}{8}b_1^3(\beta(\Lambda))+
\frac{3}{4}b_1(\beta(\Lambda))b_2(\beta(\Lambda))+
b_3(\beta(\Lambda)).
\end{equation}
\begin{deff}\label{34-d-11}
Let the assumptions from Definition \ref{34-d-10} be correct. Using the symbol $\mathbb{H}_1$, we notate a linear operator from the set of polynomials $\mathbb{C}[S_1]$ to $\mathbb{C}$, which is given by the formula
\begin{equation}\label{34-f-10}
\mathbb{H}_1(S_1^k)=\mathbb{H}(S_1^k)\Big|_{b_1\equiv0}
\end{equation}
for all $k\in\mathbb{N}\cup\{0\}$.
\end{deff}
\begin{exm}\label{34-e-5}
Taking into account Definitions \ref{34-d-9} and \ref{34-d-11}, the following relations hold
\begin{equation}\label{34-f-11}
\mathbb{H}_1(1)=1,\,\,\,
\mathbb{H}_1(S_1^{\phantom{1}})=0,\,\,\,
\mathbb{H}_1(S_1^2)=2
{\centering\adjincludegraphics[width = 1.1 cm, valign=c]{34-3.eps}}=
\frac{1}{2}b_2(\beta(\Lambda)),\,\,\,
	\mathbb{H}_1(S_1^3)=8
{\centering\adjincludegraphics[width = 1 cm, valign=c]{34-4.eps}}=
b_3(\beta(\Lambda)).
\end{equation}
\end{exm}
\begin{deff}\label{34-d-12}
Let the assumptions from the definitions outlined above be true. Let also $f(\cdot)$ be a function that can be represented as a Taylor series 
\begin{equation}\label{34-f-12}
f(z)=\sum_{j=0}^{+\infty}f_jz^j
\end{equation}
for all argument values $z\in\mathbb{C}$. Then the regularized "physical" functional integral of the function $f(S_1[\,\cdot\,])$ with a weight $\exp(-S_0[\,\cdot\,,\beta(\Lambda)])$ is defined as a formal series
\begin{equation}\label{34-f-13}
\sum_{j=0}^{+\infty}f_j\mathbb{H}(S_1^j),
\end{equation}
which is indicated by a symbolic entry of the form
\begin{equation}\label{34-f-14}
\int_V\mathcal{D}a\,e^{-S_0[a,\beta(\Lambda)]}f(S_1[a]).
\end{equation}
\end{deff}
\begin{exm}\label{34-e-6}
Taking into account Definition \ref{34-d-12}, the models studied in Section \ref{34:sec:m} are formulated as follows
\begin{equation}\label{34-f-15}
\int_V\mathcal{D}a\,e^{-S_0[a,\beta(\Lambda)]+isS_1[a]}=
\sum_{j=0}^{+\infty}\frac{(is)^j}{j!}\mathbb{H}(S_1^j),
\end{equation}
\begin{equation}\label{34-f-16}
\int_V\mathcal{D}a\,e^{-S_0[a,\beta(\Lambda)]-\lambda S_1^2[a]}=
\sum_{j=0}^{+\infty}\frac{(-\lambda)^j}{j!}\mathbb{H}(S_1^{2j}).
\end{equation}
\end{exm}
\begin{remark}\label{34-r-8}
Note that the physical approach is much narrower. Indeed, in Section \ref{34:sec:m}, both models were connected by an integral transformation. In this case, such a relation is preserved only if the integral transformation applied to the formal series is understood as the transformation applied to each individual coefficient. This procedure leads to a new formal series.
\end{remark}
\begin{remark}\label{34-r-9}
Note that the divergences in the two approaches have different forms and meanings. The following table shows the problems that arise when removing the regularization, that is, in the limit $\Lambda\to+\infty$.
\begin{center}
\renewcommand{\arraystretch}{2}
\begin{tabular}{|c|c|}
	\hline
	Mathematical approach & Physical approach\\
	\hline
	 The phase of $\Phi(s,\beta(\Lambda))$ oscillates rapidly&\multirow{2}{*}{The terms with $b_1(\beta(\Lambda))$ diverge in \eqref{34-f-15} and \eqref{34-f-16}} \\
	\cline{1-1}
	 The function $Z(\lambda,\beta(\Lambda))$ tends to zero &\\
	\hline
\end{tabular}
\renewcommand{\arraystretch}{1}
\end{center}
Thus, the mathematical approach is more detailed. From a physical point of view, the divergences are the same in the two models, and the coefficients of the series become infinitely large. In turn, there is a clear distinction within Section \ref{34:sec:m}. In one case, when the regularization is removed, the phase oscillates uncontrollably, while in the other case, the function identically becomes zero.
\end{remark}
\begin{remark}\label{34-r-10}
When working with formal series, the numbers $\beta_j^{-1}(\Lambda)$ occur in the definition of an element of diagrammatic technique, which is why any shift of the form $\beta_j(\Lambda)\to\beta_j(\Lambda)+\alpha_j$ leads to the need to do one more expansion of the formal series. To circumvent this process, it suffices to note that the specified shift is equivalent to the appearance of the term 
$$-\sum_{j=1}^{+\infty}a_j^2\alpha_j^{\phantom{1}}$$ 
in the exponential function. In the special case, if $a_j=\alpha\in\mathbb{R}$ for all $j\in\mathbb{N}$, the shift is $-\alpha S_1[a]$. This addition leads to the appearance of an additional vertex.
\end{remark}
\begin{deff}\label{34-d-13}
A formal series \eqref{34-f-15} is called renormalizable if it is renormalizable in the sense of Definition \ref{34-d-5}, taking into account two changes.
\begin{enumerate}
	\item The shift $\beta(\Lambda)$ is understood taking into account Remark \ref{34-r-10}.
	\item When removing the regularization, the final value should not be the series as a whole, but only each individual coefficient.
\end{enumerate}
The concept of renormalizability is extended to the series from \eqref{34-f-16} by replacing the parameters $s\leftrightarrow\lambda$ in the formulation.
\end{deff}
\begin{theorem}\label{34-t-3}
Let $\beta\in\mathcal{B}_2\setminus\mathcal{B}_1$ and the assumptions of Definition \ref{34-d-12} are fulfilled, then the formal series \eqref{34-f-15} and \eqref{34-f-16} are renormalizable, and the renormalization process itself consists in passing to formal series of the form
\begin{equation}\label{34-f-17}
	\int_V\mathcal{D}a\,e^{-S_0[a,\beta(\Lambda)]+is(S_1[a]-r(\Lambda)/2-\theta/2)}=
	\sum_{j=0}^{+\infty}\frac{(is)^j}{j!}\mathbb{H}\big((S_1-r(\Lambda)/2-\theta/2)^j\big),
\end{equation}
\begin{equation}\label{34-f-18}
	\int_V\mathcal{D}a\,e^{-S_0[a,\beta(\Lambda)]-\lambda (S_1[a]-r(\Lambda)/2-\theta/2)^2}=
	\sum_{j=0}^{+\infty}\frac{(-\lambda)^j}{j!}\mathbb{H}((S_1-r(\Lambda)/2-\theta/2)^{2j}),
\end{equation}
where $\theta\in\mathbb{R}$ and the function $r(\Lambda)$ is selected according to Definition \ref{34-d-3}.
\end{theorem}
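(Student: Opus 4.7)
The plan is to verify renormalizability coefficient-by-coefficient in the sense of Definition \ref{34-d-13}, in three steps.

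\textbf{Step 1: interpret the shift as an admissible operation.} By Remark \ref{34-r-10}, uniformly shifting $\beta_j(\Lambda)\to\beta_j(\Lambda)+\alpha$ inside the exponent of the Gaussian weight is equivalent to inserting an extra $-\alpha S_1[a]$ term, so the transition from the exponent $-S_0[a,\beta(\Lambda)]+isS_1[a]$ to $-S_0[a,\beta(\Lambda)]+is(S_1[a]-r(\Lambda)/2-\theta/2)$ lies precisely in the class of shifts allowed by Definition \ref{34-d-5}. The case \eqref{34-f-16}$\to$\eqref{34-f-18} is handled analogously after completing the square, which only produces an overall real (non-oscillating) factor of the type $h_1$ in Definition \ref{34-d-5}.

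\textbf{Step 2: establish the combinatorial identity}
\begin{equation*}
\mathbb{H}\bigl((S_1-b_1(\beta(\Lambda))/2)^k\bigr)=\mathbb{H}_1(S_1^k),\qquad k\in\mathbb{N}\cup\{0\}.
\end{equation*}
This I would prove by a direct diagrammatic argument: classify each Wick pairing contributing to $\mathbb{H}(S_1^m)$ by the subset of the $m$ vertices on which both external lines are paired with each other (the ``tadpole'' vertices). A single tadpole contributes the factor $b_1(\beta(\Lambda))/2$ from Example \ref{34-e-3}, while the remaining $m-t$ vertices must be paired without producing any further tadpoles, giving $\mathbb{H}_1(S_1^{m-t})$ by Definition \ref{34-d-11}. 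Hence
\begin{equation*}
\mathbb{H}(S_1^m)=\sum_{t=0}^{m}\binom{m}{t}\bigl(b_1(\beta(\Lambda))/2\bigr)^{t}\mathbb{H}_1(S_1^{m-t}),
\end{equation*}
and a short binomial-inversion computation inside $\mathbb{H}((S_1-b_1/2)^k)=\sum_m\binom{k}{m}(-b_1/2)^{k-m}\mathbb{H}(S_1^m)$ produces the claimed identity; the internal sum reduces to $(1-1)^{k-n}$ and vanishes unless $n=k$, leaving $\mathbb{H}_1(S_1^k)$.

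\textbf{Step 3: take the regularization-removal limit coefficient-wise.} Set $c=r(\Lambda)/2+\theta/2$ and $\delta(\Lambda)=c-b_1(\beta(\Lambda))/2$. From Definition \ref{34-d-3} we have $b_1(\beta(\Lambda))=r(\Lambda)+\kappa+o(1)$, so $\delta(\Lambda)\to(\theta-\kappa)/2$ as $\Lambda\to+\infty$. Applying Step 2 together with a second binomial expansion yields
\begin{equation*}
\mathbb{H}\bigl((S_1-c)^{j}\bigr)=\sum_{k=0}^{j}\binom{j}{k}(-\delta(\Lambda))^{j-k}\mathbb{H}_1(S_1^{k}).
\end{equation*}
Because $\mathbb{H}_1(S_1^{k})$ is, by construction, a polynomial only in $b_2(\beta(\Lambda)),\ldots,b_k(\beta(\Lambda))$, and each of these converges to the finite value $b_i(\beta)$ thanks to $\beta\in\mathcal{B}_2$, each coefficient on the right-hand side of \eqref{34-f-17} admits a finite limit. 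The same calculation with $j$ replaced by $2j$ and $(is)^j/j!$ replaced by $(-\lambda)^j/j!$ disposes of \eqref{34-f-18}.

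\textbf{Main obstacle.} I expect the bookkeeping in Step 2 to be the delicate part: one must carefully ensure that the tadpole/non-tadpole dichotomy exhausts all Wick pairings without double-counting, and that the ``remaining'' pairings are exactly those enumerated by the $b_1\equiv0$ reduction defining $\mathbb{H}_1$. Once that identity is in place, Steps 1 and 3 amount to an elementary binomial computation plus the already-established convergence of $b_k(\beta(\Lambda))$ for $k\geqslant2$.
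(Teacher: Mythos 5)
Your proposal is correct and follows essentially the same route as the paper: the key decomposition $\mathbb{H}(S_1^m)=\sum_t\binom{m}{t}\bigl(b_1(\beta(\Lambda))/2\bigr)^t\,\mathbb{H}_1(S_1^{m-t})$ is exactly the paper's formula \eqref{34-f-19}, and your two-stage binomial computation reproduces the paper's single chain yielding $\mathbb{H}\bigl((S_1-\zeta)^n\bigr)=\mathbb{H}_1\bigl((S_1+\xi-\zeta)^n\bigr)$ with $\xi-\zeta$ finite and $\mathbb{H}_1(S_1^k)$ depending only on $b_2,\ldots,b_k$, which converge since $\beta\in\mathcal{B}_2$. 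The only differences are cosmetic: you justify \eqref{34-f-19} explicitly by the tadpole classification and spell out the admissibility of the shift via Remark \ref{34-r-10}, points the paper leaves implicit.
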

\begin{proof} For convenience, we notate $\zeta=r(\Lambda)/2+\theta/2$. Next, note that $\mathbb{H}(S_1^j)$ can be decomposed into a sum by powers $\xi=b_1(\beta(\Lambda))/2$ in the following form
\begin{equation}\label{34-f-19}
\mathbb{H}(S_1^j)=\sum_{i=0}^jC_i^j\mathbb{H}_1(S_1^i)\xi^{j-i},
\end{equation}
where the coefficient is equal to the number of ways to select $i$ elements from $j$ pieces. Then we perform the chain of transformations
\begin{align*}
\mathbb{H}\big((S_1-\zeta)^n\big)=&\sum_{j=0}^nC_j^n\mathbb{H}(S_1^j)(-\zeta)^{n-j}=
\sum_{j=0}^n\sum_{i=0}^jC_j^nC_i^j\mathbb{H}_1(S_1^i)\xi^{j-i}(-\zeta)^{n-j}\\=&
\sum_{i=0}^n\mathbb{H}_1(S_1^i)
\sum_{j=i}^nC_j^nC_i^j\xi^{j-i}(-\zeta)^{n-j}\\=&
\sum_{i=0}^n\mathbb{H}_1(S_1^i)
\sum_{j=0}^{n-i}C_{j+i}^nC_i^{j+i}\xi^{j}(-\zeta)^{n-j-i}\\=&
\sum_{i=0}^n\mathbb{H}_1(S_1^i)C_{n-i}^n(\xi-\zeta)^{n-i}=\mathbb{H}_1\big((S_1+\xi-\zeta)^n\big),
\end{align*}
from which it follows that all coefficients of the formal series \eqref{34-f-17} and \eqref{34-f-18} are finite because $\beta\in\mathcal{B}_2\setminus\mathcal{B}_1$ and the combination $\xi-\zeta$  is finite before and after removing the regularization.
\end{proof}

\section{Conclusion}
\label{34:sec:zac}

The paper we considered the regularization and the renormalization for a special model both within the framework of working with formal series (the perturbative approach) and using the explicit construction of the functional integral in the sense of a transition to an infinitely large number of dimensions (the limit transition in the finite-dimensional case). It has been clearly demonstrated that the divergences that appear in the first case are of the same nature, while in the second case they can be expressed either through phase oscillation or through zeroing of the entire functional.

One of the interesting and important generalization options for the considered model \eqref{34-17} is the transition
\begin{equation*}
S_1^2[a]=
\sum_{i_1,i_2,i_3,i_4=1}^{+\infty}\delta_{i_1i_2}\delta_{i_3i_4}a_{i_1}a_{i_2}a_{i_3}a_{i_4}\to
\sum_{i_1,i_2,i_3,i_4=1}^{+\infty}C_{i_1i_2i_3i_4}a_{i_1}a_{i_2}a_{i_3}a_{i_4},
\end{equation*}
where the coefficients $C_{i_1i_2i_3i_4}$ have a more complex structure. Using the notation from Example \ref{34-e-7}, we can select coefficients in the form
\begin{equation*}
C_{i_1i_2i_3i_4}=\int_{\mathbf{B}}\mathrm{d}^nx\,\phi_{i_1}(x)\phi_{i_2}(x)\phi_{i_3}(x)\phi_{i_4}(x),
\end{equation*}
which lead to a model with quartic interaction. It is known that for $n\leqslant4$, such a model is renormalizable within the framework of the perturbative approach, see \cite{29-3,29-4,Iv-2024-1,Kh-2024}. 

Note that the consideration of the finite-dimensional case in Section \ref{34:sec:m} can also be considered as some type of regularization. In applications, this approach occurs less frequently, since working with a deformed Green's function (with a deformed spectrum) is more attractive from the point of view of numerical analysis of diagrams.

Note that the renormalization process formulated in Definition \ref{34-d-5} does not make the introduced functions $h_i$ unique. Moreover, it is not clear whether different combinations of singular parts can be used. This issue is open and requires additional research. Intuitively, it can be assumed that the renormalization process in the physical approach, which focuses on each coefficient individually, contains additional constraints that make the process more specific.

\vspace{2mm}
\noindent\textbf{Acknowledgements.} The author is grateful to N.V.Kharuk for useful comments.

\end{document}